\pdfoutput=1
\RequirePackage{ifpdf}
\ifpdf 
\documentclass[pdftex]{sigma}
\else
\documentclass{sigma}
\fi

\usepackage{bbold}
\usepackage{bbm}

\numberwithin{equation}{section}

\newtheorem{Theorem}{Theorem}[section]
\newtheorem*{Theorem*}{Theorem}

\newtheorem{Proposition}[Theorem]{Proposition}
 { \theoremstyle{definition}

\newtheorem{Example}[Theorem]{Example}
\newtheorem{Remark}[Theorem]{Remark} }

\begin{document}
\allowdisplaybreaks

\newcommand{\arXivNumber}{2306.10609}

\renewcommand{\PaperNumber}{065}

\FirstPageHeading

\ShortArticleName{Realizations of the Extended Snyder Model}

\ArticleName{Realizations of the Extended Snyder Model}

\Author{Tea MARTINI\'C BILA\'C~$^{\rm a}$ and Stjepan MELJANAC~$^{\rm b}$}

\AuthorNameForHeading{T.~Martini\'c Bila\'c and S.~Meljanac}

\Address{$^{\rm a)}$~Faculty of Science, University of Split, Rudera Bo\v{s}kovi\'ca 33, 21000 Split, Croatia}
\EmailD{\href{mailto:teamar@pmfst.hr}{teamar@pmfst.hr}}

\Address{$^{\rm b)}$~Division of Theoretical Physics, Ruder Bo\v skovi\'c Institute, Bijeni\v cka cesta 54,\\
\hphantom{$^{\rm b)}$}~10002 Zagreb, Croatia}
\EmailD{\href{mailto:meljanac@irb.hr}{meljanac@irb.hr}}

\ArticleDates{Received June 21, 2023, in final form August 31, 2023; Published online September 14, 2023}

\Abstract{We present the exact realization of the extended Snyder model. Using similarity transformations, we construct realizations of the original Snyder and the extended Snyder models. Finally, we present the exact new realization of the $\kappa$-deformed extended Snyder model.}	

\Keywords{Snyder model; extended Snyder model; $\kappa$-deformed extended Snyder model; realizations}

\Classification{81R60}

\section{Introduction}

The first example of NC geometry was presented in \cite{Snyder}. Fundamental length scale could be identified in natural way with Planck length $L_p= \sqrt{G\hbar /c^3}
\approx 1.62 \times 10^{-35}$~m \cite{Garay}. The length scale enters the theory through commutators of spacetime coordinates in~\cite{Amelino-2, Amelino-Camelia-1, Doplicher, Doplicher-2}. Deformations of spacetime symmetries-gravity, group-valued momenta, and noncommutative fields were presented in~\cite{Arzano}.

Coproduct and star product in the Snyder model were calculated in \cite{Battisti-1, Girelli} using ideas from development of NC geometry \cite{Majid}. However, in the Snyder model, the algebra generated by position operators is not closed and the bialgebra resulting from implementation of the coproduct is not a Hopf algebra. In particular, the coproduct is noncoassociative and the star product is nonassociative as well \cite{Battisti-1}.

A closed Lie algebra can be obtained if one adds generators of Lorentz algebra \cite{Girelli} to position generators. In this way one can define a Hopf algebra with a coassoaciative coproduct. If Lorentz generators are added as extended coordinates, we call this algebra extended Snyder algebra, and the theory based on this the extended Snyder model \cite{Meljanac-1}.

Some recent advances in the Snyder model are presented in \cite{Battisti-2, Battisti-1, Girelli, Meljanac-2}. Construction of field theory was addressed in \cite{Battisti-1,Franchino, Girelli} and different applications to phenomenology were considered in \cite{Mignemi-2, Mignemi-1}. Extensions in curved background were given in \cite{Banerjee, Guo, Kowalski, Lukierski-1, Meljanac-1, Meljanac-4, Meljanac-5, Mignemi-3}.

The Snyder model is defined as a Lie algebra generated by noncommutative coordinates $\hat{x}_{\mu}$ and Lorentz generators $M_{\mu\nu}$, $( M_{\mu\nu}=-M_{\nu\mu}) $, satisfying the commutation relations
\begin{align}
&\big[ \hat{x}_{\mu},\hat{x}_{\nu}\big] ={\rm i}\beta^{2}M_{\mu\nu},\qquad \mu,\nu=0,1,2,3,\qquad\beta \in \mathbb{R},\label{1.1}\\
&\big[ M_{\mu\nu},\hat{x}_{\lambda}\big] =-{\rm i}\big( \hat{x}_{\mu}\eta_{\nu\lambda}-
\hat{x}_{\nu}\eta_{\mu\lambda}\big), \label{1.2}\\
&[ M_{\mu\nu},M_{\rho\sigma}] ={\rm i}\big( \eta_{\mu\rho} M_{\nu\sigma} - \eta_{\mu\sigma} M_{\nu\rho} - \eta_{\nu\rho} M_{\mu\sigma} + \eta_{\nu\sigma} M_{\mu\rho}\big), \label{1.3}
\end{align}
where $\eta=\operatorname{diag}(-1,1,1,1)$ is the Minkowski metric.

Our goal is to construct realizations of the Snyder algebra \eqref{1.1}--\eqref{1.3} in terms of the Heisenberg algebra generated by
coordinates $x_{\mu}$ and momenta $p_{\mu}$ satisfying the commutation relations
\begin{gather*}
[ x_{\mu},{x_\nu}] =[ p_{\mu},p_{\nu}] =0, \qquad [ x_{\mu},p_{\nu}] ={\rm i}\eta_{\mu\nu}.
\end{gather*}

In Section \ref{sec-2}, we start with the original Snyder realization with $M_{\mu\nu}=x_{\mu}p_{\nu}-x_{\nu}p_{\mu}$ and use similarity transformations to construct a family of realizations of Snyder model. In Section~\ref{sec 3}, we apply this method to construct realizations of the extended Snyder model in which the Lorentz generators are realized by $M_{\mu\nu}=\hat{x}_{\mu\nu}+x_{\mu}p_{\nu}-x_{\nu}p_{\mu}$, where $\hat{x}_{\mu\nu}$ are additional tensorial generators. In Section \ref{sec-4}, we present the exact new realization of the $\kappa$-deformed extended Snyder model. Finally, in Section~\ref{sec-5}, we give the discussion and conclusion.

\section{Realizations of the Snyder model}\label{sec-2}
The original Snyder realization in terms of $x_{\mu}$ and $p_{\nu}$ is given by
\begin{align}
&\hat{x}_{\mu}=x_{\mu}+\beta^{2}(x\cdot p)p_{\mu},\label{2.1}\\
&M_{\mu\nu}=x_{\mu}p_{\nu}-x_{\nu}p_{\mu},\label{2.2}
\end{align}
where $x\cdot p=x_{\alpha}p_\alpha$.\footnote{We denote $x_{\alpha}p_{\alpha}=\sum_{\alpha,\beta=0}^{3} \eta_{\alpha\beta} x_{\alpha}p_{\alpha}$, and generally summation over pair of repeated indices is assumed.}
Further realizations of the Snyder model can be obtained by similarity transformations by the operator $S={\rm e}^{{\rm i}G}$, where
\begin{align}
&G=F_{0}(u)+(x\cdot p)F(u),\qquad u=\beta^{2}p^{2}, \qquad \beta \in \mathbb{R},\nonumber\\
&F_{0}(0)=0,\qquad F(0)=0,\qquad p^2=p_{\alpha}p_{\alpha}.\label{2.3}
\end{align}
Note that for $\beta^{2}=0$ we have $G=0$ and $S=\text{id}$
and $G$ is Lorentz invariant and linear in the coordinates $x_{\alpha}$.
\begin{Theorem}\label{1-T}
Using similarity transformation defined by $S={\rm e}^{{\rm i}G}$, where $G$ is given by \eqref{2.3}, we obtain the corresponding realizations of Snyder model
\begin{align*}
\hat{x}_{\mu}=S\big( x_{\mu}+\beta^{2}(x\cdot p)p_{\mu}\big) S^{-1}=x_{\mu}\varphi_{1}(u)+\beta^{2}(x\cdot p)p_{\mu}\varphi_{2}(u)+\beta^{2}p_{\mu}\varphi_{3}(u),
\end{align*}
where
\begin{equation}\label{2.4}
\varphi_{2}(u)=\frac{1+2\dot{\varphi}_{1}(u)\varphi_{1}(u)}{\varphi_{1}(u)-2u\dot{\varphi}_{1}(u)}, \qquad \dot{\varphi}_{1}=\frac{{\rm d}\varphi_{1}(u)}{{\rm d}u} \qquad \text{and} \qquad u=\beta^{2}p^{2}.
\end{equation}
\end{Theorem}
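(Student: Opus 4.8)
The plan is to exploit two properties of the generator $G$ in \eqref{2.3}: it is a Lorentz scalar and it is at most linear in the coordinates $x_\alpha$. First I would write the similarity transformation as a Hadamard series, ${\rm e}^{{\rm i}G}A\,{\rm e}^{-{\rm i}G}=A+{\rm i}[G,A]-\frac{1}{2}[G,[G,A]]+\cdots$, and note that $[M_{\mu\nu},p^2]=0$ and $[M_{\mu\nu},x\cdot p]=0$, whence $[M_{\mu\nu},G]=0$ and therefore $SM_{\mu\nu}S^{-1}=M_{\mu\nu}$. It follows that $S$ maps any realization of \eqref{1.1}--\eqref{1.3} to another one with the \emph{same} Lorentz sector $M_{\mu\nu}=x_\mu p_\nu-x_\nu p_\mu$; in particular $\hat{x}_\mu:=S\big(x_\mu+\beta^2(x\cdot p)p_\mu\big)S^{-1}$ automatically satisfies $[\hat{x}_\mu,\hat{x}_\nu]={\rm i}\beta^2(x_\mu p_\nu-x_\nu p_\mu)$, and $[M_{\mu\nu},\hat{x}_\lambda]$ keeps the form \eqref{1.2}.

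Second, I would pin down the form of $\hat{x}_\mu$. Because $[G,x_\mu]$ is again at most linear in $x$ while $[G,p_\mu]={\rm i}p_\mu F(u)$ is $x$-independent, an induction on the order of the Hadamard series shows that every term produced from $x_\mu+\beta^2(x\cdot p)p_\mu$ is at most linear in $x$. Being in addition a Lorentz vector, it must, in the ordering with the $x$'s to the left, be a combination of the only three such structures $x_\mu$, $(x\cdot p)p_\mu$ and $p_\mu$ with coefficients that are Lorentz-scalar functions of $p$, i.e.\ functions of $u=\beta^2 p^2$:
\[
\hat{x}_\mu=x_\mu\varphi_1(u)+\beta^2(x\cdot p)p_\mu\varphi_2(u)+\beta^2 p_\mu\varphi_3(u),
\]
with $\varphi_1(0)=1$ since $S\to\text{id}$ and $\hat{x}_\mu\to x_\mu$ as $\beta\to0$; the freedom in $F_0$ and $F$ is what produces the freedom in $\varphi_1$ and $\varphi_3$.

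Third, I would compute $[\hat{x}_\mu,\hat{x}_\nu]$ from this form and compare with the first paragraph. The only identities needed are $[x_\mu,p_\nu]={\rm i}\eta_{\mu\nu}$ together with $[x\cdot p,\varphi(u)]=2{\rm i}u\dot\varphi(u)$ and $[x_\mu,\varphi(u)]=2{\rm i}\beta^2 p_\mu\dot\varphi(u)$, which follow from $[x\cdot p,p^2]=2{\rm i}p^2$ and $[x_\mu,p^2]=2{\rm i}p_\mu$. Reordering everything, the symmetric tensors $p_\mu p_\nu$, $x_\mu x_\nu$, $\eta_{\mu\nu}$ drop out of the antisymmetric bracket; the $\varphi_3$-terms yield only expressions symmetric in $\mu,\nu$ (reflecting that $\beta^2 p_\mu\varphi_3(u)$ is a momentum-space gradient) and cancel; the $\varphi_2^2$-terms cancel as well; and one is left with $[\hat{x}_\mu,\hat{x}_\nu]={\rm i}\beta^2(x_\mu p_\nu-x_\nu p_\mu)\big(\varphi_1\varphi_2-2u\dot\varphi_1\varphi_2-2\dot\varphi_1\varphi_1\big)$. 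Matching with the first paragraph forces $\varphi_2\big(\varphi_1-2u\dot\varphi_1\big)=1+2\dot\varphi_1\varphi_1$, which is \eqref{2.4} (and $\varphi_1(0)=1$ keeps the denominator nonzero near $u=0$). I expect the main obstacle to be precisely the bookkeeping in this last step: carrying out the reorderings correctly and verifying that every contribution not proportional to $x_\mu p_\nu-x_\nu p_\mu$ (in particular all the $\varphi_3$-dependent ones) genuinely cancels. Everything else is structural.
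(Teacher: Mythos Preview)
Your argument is correct and in fact rather clean, but it follows a genuinely different route from the paper's own proof. The paper does not compute $[\hat{x}_\mu,\hat{x}_\nu]$ directly. Instead it conjugates $x_\mu$ and $p_\mu$ \emph{separately}: two auxiliary propositions establish $x_\mu'=Sx_\mu S^{-1}=x_\mu g_1(u)+\beta^2(x\cdot p)p_\mu g_2(u)$ and $p_\mu'=Sp_\mu S^{-1}=p_\mu g_3(u)$, with closed exponential formulae for $g_1$ and $g_3$. The key relation is then extracted not from the Snyder commutator but from the preserved Heisenberg one, $[x_\mu',p_\nu']={\rm i}\eta_{\mu\nu}$, which yields $g_3=1/g_1$ and $g_2=2\dot g_1 g_1/(g_1-2u\dot g_1)$. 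Finally $\hat{x}_\mu=x_\mu'+\beta^2(x'\cdot p')p_\mu'$ is reassembled and one reads off $\varphi_1=g_1$, $\varphi_2=g_2+g_3+ug_2 g_3^2$, from which \eqref{2.4} follows algebraically.

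Compared with this, your approach is more direct: you use Lorentz covariance and linearity in $x$ to pin down the ansatz for $\hat{x}_\mu$ in one step, and then impose the Snyder commutator itself (legitimate, since $SM_{\mu\nu}S^{-1}=M_{\mu\nu}$) to obtain \eqref{2.4}. This bypasses the auxiliary functions $g_i$ entirely and is closer in spirit to the earlier derivations the paper cites that did not use similarity transformations. The trade-off is that the paper's detour through $x_\mu'$ and $p_\mu'$ yields explicit closed expressions $g_1=\big({\rm e}^{F(1-2u\,{\rm d}/{\rm d}u)}\big)(1)$ and $g_3=\big({\rm e}^{-F(1+2u\,{\rm d}/{\rm d}u)}\big)(1)$, which tie $\varphi_1$ back to the generator $F$ and are reused later (e.g.\ in passing to the extended model); your argument gives the constraint \eqref{2.4} but not this dictionary. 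Your anticipated ``bookkeeping obstacle'' is harmless: the $\varphi_3$-terms and the $\varphi_2^2$-terms are manifestly symmetric in $\mu,\nu$ and drop out exactly as you say.
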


In order to prove the above theorem, first we prove the following propositions. Note that if~$F_{0}(u)=0$, then $\varphi_{3}(u)=0$, hence, for simplicity in what follows we assume that $F_{0}(u)=0$ and~$G=(x\cdot p)F(u)$.

\begin{Proposition}\label{1-P}
Let $x_{\mu}'=Sx_{\mu}S^{-1}$, where $S={\rm e}^{{\rm i}G}$ and $G=(x\cdot p)F(u)$. Then
\begin{equation}\label{2.5}
x_{\mu}'=x_{\mu}g_{1}(u)+\beta^{2}(x\cdot p)p_{\mu}g_{2}(u),
\end{equation}
where
\begin{equation}
g_{1}(u)=\big( {\rm e}^{F(1-2u\frac{\rm d}{{\rm d}u})}\big) (1).\label{2.10}
\end{equation}
\end{Proposition}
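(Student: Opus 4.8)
The plan is to compute the conjugation $x_\mu' = S x_\mu S^{-1}$ by differentiating with respect to a parameter. I would introduce $x_\mu(t) = \mathrm{e}^{\mathrm{i}tG} x_\mu \mathrm{e}^{-\mathrm{i}tG}$ so that $x_\mu(0) = x_\mu$, $x_\mu(1) = x_\mu'$, and $\frac{\mathrm{d}}{\mathrm{d}t} x_\mu(t) = \mathrm{i}\,\mathrm{e}^{\mathrm{i}tG}[G, x_\mu]\mathrm{e}^{-\mathrm{i}tG}$. The essential input is the single commutator $[G, x_\mu]$ with $G = (x\cdot p)F(u)$, $u = \beta^2 p^2$. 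Since $[x\cdot p, x_\mu] = -\mathrm{i}x_\mu$ and $[F(u), x_\mu] = -2\mathrm{i}\beta^2 p_\mu F'(u)$ (using $[p^2, x_\mu] = -2\mathrm{i}p_\mu$), one gets $\mathrm{i}[G, x_\mu] = x_\mu F(u) + 2\beta^2 (x\cdot p) p_\mu F'(u)$ up to checking operator orderings — note $p_\mu$ commutes with both $x\cdot p$-adjustments here in the relevant sense since $[p_\mu, p^2]=0$. The key structural observation is that the right-hand side is again of the form $x_\mu a(u) + \beta^2(x\cdot p)p_\mu b(u)$, so the ansatz \eqref{2.5} is preserved under the flow; this is what makes the problem tractable.

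Next I would substitute the ansatz $x_\mu(t) = x_\mu g_1(u; t) + \beta^2 (x\cdot p) p_\mu g_2(u; t)$ into the differential equation and collect coefficients of the independent operators $x_\mu$ and $\beta^2(x\cdot p)p_\mu$. Applying $\mathrm{i}[G,\cdot]$ to $x_\mu g_1$ produces a term $x_\mu F g_1$ plus a term from $[G, g_1(u)]$; applying it to $(x\cdot p)p_\mu g_2$ produces further terms, and one must carefully use $[x\cdot p, (x\cdot p)p_\mu] = -\mathrm{i}(x\cdot p)p_\mu$ and $[u, x\cdot p] $-type relations. The upshot is that $g_1$ decouples and satisfies a linear first-order PDE in $t$ and $u$ of the form $\dot g_1 = F(u)\bigl(1 - 2u\frac{\mathrm{d}}{\mathrm{d}u}\bigr) g_1$, where the dot is $\partial/\partial t$. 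This is solved formally by $g_1(u;t) = \bigl(\mathrm{e}^{tF(u)(1-2u\frac{\mathrm{d}}{\mathrm{d}u})}\bigr)(1)$, and evaluating at $t=1$ gives exactly \eqref{2.10}. The claim about $g_2$ in \eqref{2.5} is then just that a second component exists; its explicit form is presumably treated separately (and indeed \eqref{2.4} ties $\varphi_2$ to $\varphi_1$), so for this proposition it suffices to exhibit the structure.

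The main obstacle I anticipate is bookkeeping of operator orderings: $x\cdot p$, $p_\mu$, and functions of $u = \beta^2 p^2$ do not all commute, so when I write $x_\mu(t)$ in the ansatz form I must be consistent about placing the $u$-dependent coefficients (to the right of $x_\mu$ and of $(x\cdot p)p_\mu$), and then re-express any ``wrongly ordered'' terms that appear after commuting $G$ through. In particular $[x_\mu, g(u)] = 2\mathrm{i}\beta^2 p_\mu g'(u)$ will generate cross terms that have to be folded back into the $\beta^2(x\cdot p)p_\mu$ component or shown to vanish against others. A secondary technical point is justifying the passage from the iterated commutators to the closed exponential-operator expression $\mathrm{e}^{F(1-2u\frac{\mathrm{d}}{\mathrm{d}u})}$ acting on the constant function $1$; this is the statement that the flow generated by the vector field $F(u)(1 - 2u\partial_u)$ on functions of $u$ reproduces the $t$-evolution, which follows because $1 - 2u\partial_u$ is precisely the operator by which $\mathrm{ad}_G$ acts on the $g_1$-coefficient, and one checks this closes consistently order by order.
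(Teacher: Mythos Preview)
Your proposal is correct and is essentially the paper's argument: the paper uses the Hadamard expansion and proves by induction on $n$ that $(\mathrm{ad}_{\mathrm{i}G})^n(x_\mu)=x_\mu g_{1n}+\beta^2(x\cdot p)p_\mu g_{2n}$ with the recursion $g_{1(n+1)}=F\bigl(1-2u\tfrac{\mathrm d}{\mathrm d u}\bigr)g_{1n}$, which is exactly your flow equation $\partial_t g_1=F\bigl(1-2u\tfrac{\mathrm d}{\mathrm d u}\bigr)g_1$ written term-by-term in the Taylor series in $t$. The only difference is packaging (discrete induction versus the continuous $t$-parameter ODE), and the operator-ordering checks you flag are precisely those the paper handles via the Leibniz rule and the identity $[F,x_\mu]=-2\mathrm{i}\beta^2 p_\mu\dot F$.
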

\begin{proof}
By defining the iterated commutator
\begin{equation*}
( {\rm ad}_{G}) ^{n}(x_{\mu})=\underset{n \qquad \text{times}}{\underbrace{[G,\ldots,[G,[G}},x_{\mu}]]\ldots], \qquad ( {\rm ad}_{G} ) ^{0}(x_{\mu})=x_{\mu},
\end{equation*}
and using the Hadamard formula, we have
\begin{equation}
x_{\mu}'=Sx_{\mu}S^{-1}={\rm e}^{{\rm i}G}x_{\mu}{\rm e}^{-{\rm i}G}=x_{\mu}+\sum_{n=1}^{\infty}\frac{( {\rm ad}_{{\rm i}G}) ^{n}(x_{\mu})}{n!}\label{2.8}.
\end{equation}
We prove relation \eqref{2.5} by induction on $n$. Using the Leibniz rule for adjoint representation and%
\begin{equation}\label{2.6}
[F,x_{\mu}]=-{\rm i}\frac{\partial F}{\partial p_{\mu}}=-{\rm i}2\beta^{2}p_{\mu}\dot{F}, \qquad F\equiv F(u) \qquad \text{and} \qquad \dot{F}=\frac{{\rm d}F}{{\rm d}u}
,\end{equation} it is easy to see that
for $n=1$ we have
\begin{equation*}
( {\rm ad}_{{\rm i}G}) (x_{\mu})={\rm i}[ (x\cdot p)F, x_{\mu}] =x_{\mu}g_{1 1}(u)+\beta^{2}(x\cdot p)p_{\mu}g_{2 1}(u),
\end{equation*}
where $g_{1 1}(u)=F$ and $g_{2 1}(u)=2\dot{F}$. In following, we denote $g_{ij}\equiv g_{ij}(u)$.
Assume that the relation
\begin{equation}\label{2.9}
( {\rm ad}_{{\rm i}G}) ^{n}(x_{\mu})=x_{\mu}g_{1 n}+\beta^{2}(x\cdot p)p_{\mu}g_{2 n}
\end{equation}
holds for some $n>1$. Then by the induction assumption, we have
\begin{align*}
&( {\rm ad}_{{\rm i}G}) ^{n+1}(x_{\mu})={\rm i}\big[ (x\cdot p)F, x_{\mu}g_{1 n}+\beta^{2}(x\cdot p)p_{\mu}g_{2 n}\big] =x_{\mu}g_{1( n+1) }+\beta^{2}(x\cdot p)p_{\mu}g_{2( n+1) },
\end{align*}
where, using the Leibniz rule and \eqref{2.6}, we obtain
\begin{equation}\label{2.12}
g_{1(n+1) }=Fg_{1 n}-2u\dot{g}_{1 n}F
\end{equation}
and
\begin{equation*}
g_{2(n+1) }=2\dot{F}g_{1 n}+2u\dot{F}g_{2 n}-g_{2 n}F-2u\dot{g}_{2 n}F.
\end{equation*}
Let us denote
\begin{equation*}
g_{1}(u)=\sum_{n=0}^{\infty}\frac{g_{1 n}}{n!},\qquad \text{where} \qquad g_{1 0}=1,
\end{equation*}
and
\begin{equation*}
g_{2}(u)=\sum_{n=1}^{\infty}\frac{g_{2 n}}{n!}.
\end{equation*}
Then, substituting \eqref{2.9} into \eqref{2.8}, it follows that \eqref{2.5} holds.
Now, we expand
\begin{equation*}
\big( {\rm e}^{F(1-2u\frac{\rm d}{{\rm d}u})}\big) (1)= \sum_{n=0}^{\infty}\frac{\big(F\big(1-2u\frac{\rm d}{{\rm d}u}\big)\big) ^{n}(1)}{n!}
\end{equation*}
and prove by induction on $n$ that
\begin{equation}\label{2.11}
g_{1 n}=\left(F\left(1-2u\frac{\rm d}{{\rm d}u}\right)\right)^{n} (1).
\end{equation}
Note that for $n=0$, we have $g_{1 0}={\rm id} (1)=1$ and
for $n=1$ it is easy to verify that
\begin{gather*}
\left(F\left(1-2u\frac{\rm d}{{\rm d}u}\right)\right) (1)=F=g_{11}.
\end{gather*}
Suppose that relation \eqref{2.11} is true for some $n>1$. By the induction assumption and from~\eqref{2.12}, we have
\begin{align*}
\left(F\left(1-2u\frac{\rm d}{{\rm d}u}\right)\right)^{n+1} (1)&=\left( \left(F\left(1-2u\frac{\rm d}{{\rm d}u}\right)\right) \circ \left(F\left(1-2u\frac{\rm d}{{\rm d}u}\right)\right) ^{n} \right) (1) \\
&=\left(F\left(1-2u\frac{\rm d}{{\rm d}u}\right)\right)( g_{1 n})= Fg_{1 n}-2u\dot{g}_{1 n}F=g_{1(n+1) },
\end{align*}
which proves our claim \eqref{2.11} and consequently \eqref{2.10} holds.
\end{proof}

\begin{Proposition}\label{2-P}
	Let $p_{\mu}'=Sp_{\mu}S^{-1}$, where $S={\rm e}^{{\rm i}G}$ and $G=(x\cdot p)F(u)$. Then
	\begin{equation}\label{2.14}
	p_{\mu}'=p_{\mu}g_{3}(u),
	\end{equation}
	where
	\begin{equation}
	g_{3}(u)=\big( {\rm e}^{-F(1+2u\frac{\rm d}{{\rm d}u})}\big) (1).\label{2.15}
	\end{equation}
\end{Proposition}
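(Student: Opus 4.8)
The plan is to follow exactly the strategy of Proposition~\ref{1-P}. Starting from the Hadamard formula
\begin{equation*}
p_{\mu}'=Sp_{\mu}S^{-1}={\rm e}^{{\rm i}G}p_{\mu}{\rm e}^{-{\rm i}G}=p_{\mu}+\sum_{n=1}^{\infty}\frac{({\rm ad}_{{\rm i}G})^{n}(p_{\mu})}{n!},
\end{equation*}
I would prove by induction on $n$ that every iterated commutator keeps the form $({\rm ad}_{{\rm i}G})^{n}(p_{\mu})=p_{\mu}g_{3n}$, with $g_{3n}\equiv g_{3n}(u)$ and $g_{30}=1$, and then resum the series. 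This is structurally lighter than the proof of Proposition~\ref{1-P}, because $p_{\mu}$ commutes with every function of $u=\beta^{2}p^{2}$, so no new tensor structure can be generated; only the two scalar relations $[x\cdot p,p_{\mu}]={\rm i}p_{\mu}$ and $[x\cdot p,h(u)]=2{\rm i}u\dot{h}$ (the latter following from \eqref{2.6}) are needed.

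For the base case, $({\rm ad}_{{\rm i}G})(p_{\mu})={\rm i}[(x\cdot p)F,p_{\mu}]={\rm i}[x\cdot p,p_{\mu}]F=-p_{\mu}F$ (using $[F,p_{\mu}]=0$), so $g_{31}=-F$. For the inductive step I would expand ${\rm i}[(x\cdot p)F,\,p_{\mu}g_{3n}]$ with the Leibniz rule, using that $F$, $g_{3n}$, $\dot{g}_{3n}$ and $p_{\mu}$ all mutually commute, to obtain
\begin{equation*}
({\rm ad}_{{\rm i}G})^{n+1}(p_{\mu})={\rm i}[(x\cdot p)F,\,p_{\mu}g_{3n}]=-p_{\mu}F\big(g_{3n}+2u\dot{g}_{3n}\big),
\end{equation*}
that is, the recursion $g_{3(n+1)}=-F\big(1+2u\frac{\rm d}{{\rm d}u}\big)(g_{3n})$. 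Substituting this back into the Hadamard expansion and setting $g_{3}(u)=\sum_{n=0}^{\infty}g_{3n}/n!$ gives \eqref{2.14}.

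To identify $g_{3}$ with the closed form \eqref{2.15}, I would run a second short induction --- the exact analogue of the argument establishing \eqref{2.11} --- proving $g_{3n}=\big(-F(1+2u\frac{\rm d}{{\rm d}u})\big)^{n}(1)$; summing the exponential series then yields $g_{3}(u)=\big({\rm e}^{-F(1+2u\frac{\rm d}{{\rm d}u})}\big)(1)$, as claimed.

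I do not anticipate any serious obstacle. The points needing care are sign bookkeeping --- the operator here is $-F\big(1+2u\frac{\rm d}{{\rm d}u}\big)$, with both its overall sign and its $u$-derivative sign opposite to those in $F\big(1-2u\frac{\rm d}{{\rm d}u}\big)$ from $g_{1}$, the difference originating in $[x\cdot p,p_{\mu}]={\rm i}p_{\mu}$ versus $[x\cdot p,x_{\mu}]=-{\rm i}x_{\mu}$ --- and the fact that, as in Proposition~\ref{1-P}, convergence of the series is treated only formally.
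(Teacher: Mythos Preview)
Your proposal is correct and follows essentially the same argument as the paper: Hadamard expansion, induction on $n$ to show $({\rm ad}_{{\rm i}G})^{n}(p_{\mu})=p_{\mu}g_{3n}$ with the recursion $g_{3(n+1)}=-Fg_{3n}-2uF\dot{g}_{3n}$, and a second induction identifying $g_{3n}=\big(-F(1+2u\tfrac{\rm d}{{\rm d}u})\big)^{n}(1)$. The sign bookkeeping you flag is handled correctly and matches the paper's computation.
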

\begin{proof}
Analogous to the proof of the previous proposition, first by using the Hadamard formula, we find
\begin{equation*}
p_{\mu}'=Sp_{\mu}S^{-1}={\rm e}^{{\rm i}G}p_{\mu}{\rm e}^{-{\rm i}G}=p_{\mu}+\sum_{n=1}^{\infty}\frac{( {\rm ad}_{{\rm i}G}) ^{n}(p_{\mu})}{n!}\label{2.17}.
\end{equation*}
Then, by induction on $n$, we prove that
\begin{equation}\label{2.16}
( {\rm ad}_{{\rm i}G}) ^{n}(p_{\mu})=p_{\mu}g_{3 n}.
\end{equation}
After short computation, for $n=1$ we have
\begin{equation*}
{\rm i}[(x\cdot p)F, p_{\mu}]={\rm i}[(x\cdot p), p_{\mu}]F=p_{\mu}g_{31},
\end{equation*}
 where $g_{31}=-F$.
Assume that relation \eqref{2.16} holds for some $n>1$.
 Then by the induction assumption, we find
\begin{equation*}
( {\rm ad}_{{\rm i}G}) ^{n+1}(p_{\mu})={\rm i}[(x\cdot p)F,p_{\mu}g_{3 n}]=p_{\mu}g_{3 (n+1) },
\end{equation*}
where
\begin{equation}\label{2.19}
g_{3 (n+1)} =-Fg_{3 n}-2u\dot{g}_{3 n}F,
\end{equation}
which proves claim \eqref{2.16}. Finally, if we denote
\begin{equation*}
g_{3}(u)=\sum_{n=0}^{\infty}\frac{g_{3 n}}{n!},\qquad \text{where} \qquad g_{3 0}=1,
\end{equation*}
then \eqref{2.14} holds.
Also, we prove by induction on $n$ that
\begin{equation}\label{2.18}
g_{3 n}=\left(-F\left(1+2u\frac{\rm d}{{\rm d}u}\right)\right)^{n} (1).
\end{equation}
Note that for $n=0$, we have $g_{3 0}={\rm id}(1)=1$, and
for $n=1$, we get
\begin{equation*}
\left(-F\left(1+2u\frac{\rm d}{{\rm d}u}\right)\right) (1)=-F=g_{31}.
\end{equation*}
Suppose that relation \eqref{2.18} holds for some $n>1$. Then by the induction assumption and from~\eqref{2.19}, we have
\begin{align*}
\left(-F\left(1+2u\frac{\rm d}{{\rm d}u}\right)\right)^{n+1} (1)&=\left( \left(-F\left(1+2u\frac{\rm d}{{\rm d}u}\right)\right) \circ \left(-F\left(1+2u\frac{\rm d}{{\rm d}u}\right)\right) ^{n}\right) (1) \\
&=\left(-F\left(1+2u\frac{\rm d}{{\rm d}u}\right)\right)\left( g_{3 n}\right)\\
&= -Fg_{3 n}-2u\dot{g}_{3 n}F=g_{3 (n+1)}.
\end{align*}
Therefore, \eqref{2.18} holds for every $n$, which implies that \eqref{2.15} holds.
\end{proof}

Now, using results proven in the previous propositions, we can finally prove our main result given by Theorem~\ref{1-T}.

\begin{proof}[Proof of Theorem \ref{1-T}]
Let us denote $x_{\mu}'=Sx_{\mu}S^{-1}$ and $p_{\mu}'=Sp_{\mu}S^{-1}$, where $S={\rm e}^{{\rm i}G}$ and~$G=(x\cdot p)F(u)$. Then
\begin{equation}\label{2.20}
[x_{\mu}', x_{\nu}']=[p_{\mu}', p_{\nu}']=0, \qquad [x_{\mu}', p_{\nu}']={\rm i}\eta_{\mu\nu}
\end{equation}
and
\begin{equation}\label{2.21}
x_{\mu}'+\beta^{2} (x'\cdot p')p_{\mu}'=S\big( x_{\mu}+\beta^{2}(x\cdot p)p_{\mu}\big) S^{-1}.
\end{equation}
Inserting \eqref{2.5} and \eqref{2.14} into \eqref{2.20},
we get
\begin{align*}
&{\rm i}\eta_{\mu\nu}g_{1}g_{3}+{\rm i}p_{\nu}\frac{\partial g_{3}}{\partial p_{\mu}}g_{1}+{\rm i}\beta^{2}\left( \frac{\partial p_{\nu}}{\partial p_{\alpha}}p_{\alpha}g_{3}+\frac{\partial g_{3}}{\partial p_{\alpha}} p_{\alpha}p_{\nu}\right) p_{\mu}g_{2}={\rm i}\eta_{\mu\nu}, \\
& g_{i}\equiv g_{i}(u),\qquad i=1,2,3,
\end{align*}
which implies
\begin{equation}\label{2.22}
g_{3}=\frac{1}{g_{1}}
\end{equation}
and
\begin{equation}\label{2.23}
2g_{1}\dot{g}_{3}+g_{2}( g_{3}+2u\dot{g}_{3}) =0.
\end{equation}
Substituting \eqref{2.22} into \eqref{2.23},
we find
\begin{equation}\label{2.24}
g_{2}=\frac{2\dot{g}_{1}g_{1}}{g_{1}-2u\dot{g}_{1}}.
\end{equation}
Finally, using \eqref{2.5} and \eqref{2.14}, it follows from \eqref{2.21} that
\begin{equation}\label{2.25}
S\big( x_{\mu}+\beta^{2}(x\cdot p)p_{\mu}\big) S^{-1}=x_{\mu}g_{1}+\beta^{2}(x\cdot p)p_{\mu}\big( g_{2}+g_{3}+ug_{2}g_{3}^{2}\big).
\end{equation}
If we denote $\varphi_{1}=g_{1}$ and $\varphi_{2}=g_{2}+g_{3}+ug_{2}g^{2}_{3}$, then \eqref{2.4} follows from \eqref{2.22} and $\eqref{2.24}$.
\end{proof}
\begin{Example}
For $F_{0}=0$ and $F=-\frac{1}{2}u$, we get
\begin{equation*}
\varphi_{1}(u)=\sqrt{1-u}\qquad \text{and} \qquad \varphi_{2}(u)=0,
\end{equation*}
hence,
\begin{equation*}
\hat{x}_{\mu}=x_{\mu}\sqrt{1-u}.
\end{equation*}
\end{Example}
\begin{Remark}
If $F=0$ and $F_{0}\neq 0$, then ${x}_{\mu}'=x_{\mu}+2\beta^{2}p_{\mu}\dot{F}_{0}$, $p_{\mu}'=p_{\mu}$ and
\begin{equation*}
\hat{x}_{\mu}={\rm e}^{{\rm i}F_{0}}\big( x_{\mu}+\beta^{2}(x\cdot p)p_{\mu}\big) {\rm e}^{-{\rm i}F_{0}}=x_{\mu}+\beta^{2}(x\cdot p)p_{\mu}+2\beta^{2}p_{\mu}\dot{F}_{0}( 1+u).
\end{equation*}
\end{Remark}

\begin{Remark}
	When $\varphi_{1}(u)$ is fixed and $\varphi_{2}(u)$ is given with \eqref{2.4}, then $\varphi_{3}(u)$ depends on $F_{0}$ and can be arbitrary. There is a family of realizations with fixed $\varphi_{1}(u)$ and arbitrary $\varphi_{3}(u)$.
\end{Remark}
\begin{Remark}\label{rem-3}
A Hermitian realization can be obtained starting with the hermitian form of~\eqref{2.1}, that is
\begin{equation*}
\hat{x}_{\mu}=x_{\mu}+\frac{1}{2}\beta^{2}( (x\cdot p)p_{\mu}+p_{\mu}(p\cdot x))
\end{equation*}
and instead of $G$ writing $\frac{1}{2}\big( G+G^{\dagger}\big) $. Then result of Theorem~\ref{1-T} is obtained in hermitian form~$\frac{1}{2}\big( \hat{x}_{\mu}+\hat{x}_{\mu}^{\dagger}\big) $.
\end{Remark}

\section{Realizations of the extended Snyder model}\label{sec 3}

Different realizations of the Snyder algebra can be obtained introducing additional tensorial generators $\hat{x}_{\mu\nu}=-\hat{x}_{\nu\mu}$. This alternative approach was suggested in \cite{Girelli} and it was studied perturbatively from a different point of view in \cite{Meljanac-1, Meljanac-2, Meljanac-3} based on the results in \cite{Meljanac-6}. The additional generators $\hat{x}_{\mu\nu}$ are assumed to satisfy the commutation relations
\begin{align}
&\big[ \hat{x}_{\mu\nu},\hat{x}_{\rho\sigma}\big]={\rm i}\big( \eta_{\mu\rho} \hat{x}_{\nu\sigma} - \eta_{\mu\sigma} \hat{x}_{\nu\rho} - \eta_{\nu\rho} \hat{x}_{\mu\sigma} + \eta_{\nu\sigma} \hat{x}_{\mu\rho}\big), \label{3}\\
&\big[ \hat{x}_{\mu\nu},x_{\lambda}\big]=0,\qquad 
\big[ \hat{x}_{\mu\nu},p_{\lambda}\big]=0. \label{3.2}
\end{align}
In this case, we consider realizations of the Lorentz generators of the form
\begin{align*}
&M_{\mu\nu}=\hat{x}_{\mu\nu}+x_{\mu}p_{\nu}-x_{\nu}p_{\mu},\qquad M_{\mu\nu} \rhd 1=\hat{x}_{\mu\nu} \rhd 1=x_{\mu\nu} \qquad \text{and} \qquad p_{\mu}\rhd 1=0,
\end{align*}
where $x_{\mu\nu}$ are commuting variables.
\begin{Theorem}\label{2-T}
Extension of the Snyder realization \eqref{2.1}--\eqref{2.2} with additional generators $\hat{x}_{\mu\nu}$ is given by
\begin{align}
&\hat{x}_{\mu}=x_{\mu}+\beta^{2}(x\cdot p)p_{\mu}-\beta^{2}\hat{x}_{\mu\alpha}p_{\alpha}\frac{1}{1+\sqrt{1+u}},\label{3.5}\\
&M_{\mu\nu}=\hat{x}_{\mu\nu}+x_{\mu}p_{\nu}-x_{\nu}p_{\mu}.\label{3.6}
\end{align}
\end{Theorem}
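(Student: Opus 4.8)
The plan is to verify directly that the operators \eqref{3.5}--\eqref{3.6} satisfy the Snyder relations \eqref{1.1}--\eqref{1.3}, using only the Heisenberg relations together with the tensorial relations \eqref{3}--\eqref{3.2}. Two of the three relations are immediate. Since $\hat x_{\mu\nu}$ commutes with $x_\lambda$ and $p_\lambda$, the operator $M_{\mu\nu}=\hat x_{\mu\nu}+L_{\mu\nu}$ with $L_{\mu\nu}=x_\mu p_\nu-x_\nu p_\mu$ is a sum of two commuting copies of the Lorentz algebra (relation \eqref{3} for $\hat x_{\mu\nu}$, the standard one for $L_{\mu\nu}$), so \eqref{1.3} holds automatically. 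For \eqref{1.2} I would first record that $x_\lambda$ and $p_\lambda$ transform as Lorentz vectors under $M_{\mu\nu}$, hence $x\cdot p$ and $u=\beta^2p^2$ are $M$-scalars and any function of $u$ commutes with $M_{\mu\nu}$, and that $W_\mu:=\hat x_{\mu\alpha}p_\alpha$ is also an $M$-vector: indeed $[M_{\mu\nu},W_\rho]=[\hat x_{\mu\nu},\hat x_{\rho\alpha}]p_\alpha+\hat x_{\rho\alpha}[L_{\mu\nu},p_\alpha]$, and after using \eqref{3}, \eqref{3.2} and the antisymmetry of $\hat x$ the two contributions collapse to $-{\rm i}(W_\mu\eta_{\nu\rho}-W_\nu\eta_{\mu\rho})$. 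Since each of the three building blocks of $\hat x_\mu$ is a vector and the coefficient function is a scalar, \eqref{1.2} follows.

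The substantive part is $[\hat x_\mu,\hat x_\nu]={\rm i}\beta^2M_{\mu\nu}$. I would write $\hat x_\mu=A_\mu+B_\mu$ with $A_\mu=x_\mu+\beta^2(x\cdot p)p_\mu$ the original Snyder realization and $B_\mu=-\beta^2W_\mu h(u)$, where $h(u)=\frac{1}{1+\sqrt{1+u}}$. The identity $[A_\mu,A_\nu]={\rm i}\beta^2L_{\mu\nu}$ is exactly the statement that \eqref{2.1}--\eqref{2.2} realizes Snyder and follows by a short direct computation, so it remains to show $[A_\mu,B_\nu]-[A_\nu,B_\mu]+[B_\mu,B_\nu]={\rm i}\beta^2\hat x_{\mu\nu}$. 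Because $\hat x_{\mu\nu}$ commutes with the Heisenberg algebra, $[A_\mu,B_\nu]=-\beta^2\hat x_{\nu\beta}[A_\mu,p_\beta h(u)]$, and with $[A_\mu,p_\beta]={\rm i}\eta_{\mu\beta}+{\rm i}\beta^2p_\mu p_\beta$ and $[A_\mu,u]=2{\rm i}\beta^2(1+u)p_\mu$ one obtains
\begin{equation*}
[A_\mu,B_\nu]-[A_\nu,B_\mu]=2{\rm i}\beta^2h\,\hat x_{\mu\nu}-{\rm i}\beta^4\big(h+2(1+u)\dot h\big)\big(p_\mu W_\nu-p_\nu W_\mu\big),
\end{equation*}
while, since $p_\beta h(u)$ depends only on $p$, $[B_\mu,B_\nu]=\beta^4h^2[\hat x_{\mu\alpha},\hat x_{\nu\gamma}]p_\alpha p_\gamma$, and expanding with \eqref{3}, dropping the term killed by antisymmetry and contracting the remaining $\eta$'s gives
\begin{equation*}
[B_\mu,B_\nu]={\rm i}\beta^4h^2\big(p_\mu W_\nu-p_\nu W_\mu\big)+{\rm i}\beta^2uh^2\,\hat x_{\mu\nu}.
\end{equation*}

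Adding the three pieces,
\begin{equation*}
[\hat x_\mu,\hat x_\nu]-{\rm i}\beta^2L_{\mu\nu}={\rm i}\beta^2\big(2h+uh^2\big)\hat x_{\mu\nu}+{\rm i}\beta^4\big(h^2-h-2(1+u)\dot h\big)\big(p_\mu W_\nu-p_\nu W_\mu\big),
\end{equation*}
so the required relation is equivalent to the algebraic condition $2h+uh^2=1$ together with $h^2-h=2(1+u)\dot h$. Solving $uh^2+2h-1=0$ for the branch regular at $u=0$ yields precisely $h(u)=\frac{1}{1+\sqrt{1+u}}$ (writing $s=\sqrt{1+u}$ one checks $2h+uh^2=\frac{2}{1+s}+\frac{s-1}{1+s}=1$ and $h^2-h=\frac{-s}{(1+s)^2}=2s^2\dot h=2(1+u)\dot h$), and in fact differentiating $2h+uh^2=1$ already implies the second condition, so no independent ODE has to be solved. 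The only real obstacle is the bookkeeping in assembling $[A_\mu,B_\nu]-[A_\nu,B_\mu]$ and $[B_\mu,B_\nu]$ and recognising that the spurious $p_\mu W_\nu-p_\nu W_\mu$ terms cancel exactly when $h$ solves $2h+uh^2=1$; everything else is routine. If one prefers a derivation to a verification, running the same computation with an undetermined scalar $h(u)$ in place of $\frac{1}{1+\sqrt{1+u}}$ produces these two conditions and hence forces the stated form of $\hat x_\mu$.
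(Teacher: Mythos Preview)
Your proof is correct and follows essentially the same approach as the paper: a direct verification that \eqref{3.5}--\eqref{3.6} satisfy the Snyder relations \eqref{1.1}--\eqref{1.3} by computing the relevant commutators. The paper organizes the computation by first isolating the auxiliary commutators \eqref{3.7}--\eqref{3.9} and then expanding $[\hat x_\mu,\hat x_\nu]$ directly, whereas you split $\hat x_\mu=A_\mu+B_\mu$ and treat the cross terms separately; both routes arrive at the same identity $2h+uh^2=1$ for $h(u)=\big(1+\sqrt{1+u}\big)^{-1}$, and your explicit check that the spurious $p_\mu W_\nu-p_\nu W_\mu$ terms cancel via $h^2-h=2(1+u)\dot h$ makes visible a step the paper leaves implicit. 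Your final remark---running the computation with an undetermined $h(u)$ so that the constraint $2h+uh^2=1$ \emph{derives} the coefficient rather than merely verifies it---is a genuine expository improvement over the paper's presentation.
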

\begin{proof}
In order to prove that we can construct the realization of the Snyder model by \eqref{3.5} and \eqref{3.6}, we show that \eqref{3.5} and \eqref{3.6} satisfy Snyder algebra \eqref{1.1}--\eqref{1.3}. A short computation using \eqref{2.6} yields
{\samepage\begin{align}
&\left[ x_{\mu}, \frac{1}{1+\sqrt{1+u}}\right] =\frac{-{\rm i}\beta^{2}p_{\mu}}{\sqrt{1+u}\big( 1+\sqrt{1+u}\big) ^{2}},\label{3.7}\\
&\left[ \frac{1}{1+\sqrt{1+u}}, (x\cdot p)\right] =\frac{{\rm i}u}{\sqrt{1+u}\big( 1+\sqrt{1+u}\big)^{2}},\label{3.8}
\end{align}}
and
\begin{equation}\label{3.9}
\left[ M_{\mu\nu},\frac{1}{1+\sqrt{1+u}} \right]=0.
\end{equation}
Now, from \eqref{3}--\eqref{3.2} and \eqref{3.7}--\eqref{3.8} using bilinearity of the Lie bracket, we obtain
\begin{align*}
\big[ \hat{x}_{\mu}, \hat{x}_{\nu}\big] &=\left[ x_{\mu}+\beta^{2}(x\!\cdot \! p)p_{\mu}-\beta^{2}\hat{x}_{\mu\alpha}p_{\alpha}\frac{1}{1+\!\sqrt{1+u}},x_{\nu}+\beta^{2}(x\!\cdot \! p)p_{\nu}-\beta^{2}\hat{x}_{\nu\rho}p_{\rho}\frac{1}{1+\!\sqrt{1+u}}\right] \\
&={\rm i}\beta^{2}(x_{\mu}p_{\nu}-x_{\nu}p_{\mu})+{\rm i}2\beta^{2}\frac{\hat{x}_{\mu\nu}}{1+\sqrt{1+u}}+{\rm i}\beta^{2}u\frac{\hat{x}_{\mu\nu}}{( 1+\sqrt{1+u})^{2} }\\
&={\rm i}\beta^{2}\big(x_{\mu}p_{\nu}-x_{\nu}p_{\mu}+\hat{x}_{\mu\nu}\big)\\
&={\rm i}\beta^{2}M_{\mu\nu}.
\end{align*}
Similarly, by using \eqref{3.9}, we check that
\eqref{3.5} and \eqref{3.6} satisfy \eqref{1.2}--\eqref{1.3}, therefore
\eqref{3.5} and~\eqref{3.6} is a realization of the extended Snyder model.
\end{proof}

In order to obtain a family of realizations of the extended Snyder model, we use similarity transformations from Section~\ref{sec-2}, by $S={\rm e}^{{\rm i}G}$ where $G=(x\cdot p)F(u)$. First, note that
\begin{align}
S\left( \frac{1}{1+\sqrt{1+u}}\right) S^{-1}&=S\left( \sum_{m=1}^{\infty}\binom{\frac{1}{2}}{m}u^{m-1}\right) S^{-1} \notag\\
&=\sum_{m=1}^{\infty}\binom{\frac{1}{2}}{m}\big(\beta^{2}p'^{2}\big)^{m-1}=\frac{1}{1+\sqrt{1+\beta^{2}p'^{2}}}\label{3.10}
\end{align}
and
\begin{equation}\label{3.11}
S\big( \hat{x}_{\mu\nu}\big) S^{-1}=\hat{x}_{\mu\nu}.
\end{equation}
Now \eqref{3.10} and \eqref{3.11} implies that
\begin{align*}
\hat{x}_{\mu}&=S\left( x_{\mu}+\beta^{2}(x\cdot p)p_{\mu}-\beta^{2}\hat{x}_{\mu\alpha}p_{\alpha}\frac{1}{1+\sqrt{1+u}}\right) S^{-1}\\
&=x_{\mu}'+\beta^{2}(x'\cdot p')p_{\mu}'-\beta^{2}\hat{x}_{\mu\alpha}p_{\alpha}'\frac{1}{1+\sqrt{1+\beta^{2}p'^{2}}}.
\end{align*}
Finally, by using results given in Section~\ref{sec-2}, \eqref{2.22} and \eqref{2.25}, we obtain a family of realizations of the extended Snyder model
\begin{equation}\label{3.12}
\hat{x}_{\mu}=x_{\mu}\varphi_{1}(u)+\beta^{2}(x\cdot p)p_{\mu}\varphi_{2}(u)-\beta^{2}\hat{x}_{\mu\alpha}p_{\alpha}\frac{1}{\varphi_{1}(u)+\sqrt{\varphi_{1}^{2}(u)+u}},
\end{equation}
where $\varphi_{1}(u)$ and $\varphi_{2}(u)$ satisfy \eqref{2.4}.

Note that realizations \eqref{3.5}, \eqref{3.6}, \eqref{3.12} and \eqref{2.4} are the exact results written in closed form.

\section[k-deformed extended Snyder model]{$\boldsymbol{\kappa}$-deformed extended Snyder model}\label{sec-4}

In this section we consider a family of Lie algebras containing $\kappa$-Poincar\'{e} \cite{Borowiec, Kowalski-2, Lukierski-4, Lukierski-3, Meljanac-12, Meljanac-11} and Snyder algebras as special cases. They are generated by the NC coordinates $\hat{x}_{\mu}$ and Lorentz generators $M_{\mu\nu}$ satisfying
\begin{align}
&\big[ \hat{x}_{\mu},\hat{x}_{\nu}\big] ={\rm i}\big( a_{\mu}\hat{x}_{\nu}-a_{\nu}\hat{x}_{\mu}+\beta^{2}M_{\mu\nu}\big),\label{4.1}\\
&\big[ M_{\mu\nu},\hat{x}_{\lambda}\big] =-{\rm i}\big( \hat{x}_{\mu}\eta_{\nu\lambda}-
\hat{x}_{\nu}\eta_{\mu\lambda}+a_{\mu}M_{\nu\lambda}-a_{\nu}M_{\mu\lambda}\big),\label{4.2}\\
&[ M_{\mu\nu},M_{\rho\sigma}] ={\rm i}( \eta_{\mu\rho} M_{\nu\sigma} - \eta_{\mu\sigma} M_{\nu\rho} - \eta_{\nu\rho} M_{\mu\sigma} + \eta_{\nu\sigma} M_{\mu\rho}),\label{4.3}
\end{align}
where $a_{\mu}=\frac{1}{\kappa}u_{\mu}$, $u^{2}=(-1,0,1)$ and $\kappa$ is the mass parameter with $\frac{1}{\kappa}\neq \beta$. Such models were considered in \cite{Meljanac-7, Meljanac-8} and the $\kappa$-deformed extended Snyder model was considered in \cite{Lukierski-2, Meljanac-12, Meljanac-9}.

If $M_{\mu\nu}=x_{\mu}p_{\nu}-x_{\nu}p_{\mu}$ and $[ M_{\mu\nu}, p_{\lambda}] ={\rm i}( p_{\nu}\eta_{\mu\lambda}-p_{\mu}\eta_{\nu\lambda}),$ then one particular realization of above algebra is given in \cite{Meljanac-7, Meljanac-8} with
\begin{equation*}
\hat{x}_{\mu}=x_{\mu}\sqrt{1+\big( a^{2}-\beta^{2}\big)p^{2}}+M_{\mu\alpha}a_{\alpha}.
\end{equation*}
For $a_{\mu}=0$, we get a realization of the Snyder model
\begin{equation*}
\hat{x}_{\mu}=x_{\mu}\sqrt{1-u}.
\end{equation*}
For $\beta^{2}=0$, we get the natural realization \cite{Meljanac-12, Meljanac-11}, i.e.,
a realization in the classical basis \cite{Borowiec} of the $\kappa$-Poincar\'{e} algebra
\begin{equation*}
\hat{x}_{\mu}=x_{\mu}\sqrt{1+a^{2}p^{2}}+M_{\mu\alpha}a_{\alpha}.
\end{equation*}
In the following paper we present the exact new result for the $\kappa$-deformed extended Snyder model that is written in closed form and different from the perturbative results discussed in~\cite{Meljanac-10, Meljanac-9}.

\begin{Theorem}\label{3-T}
	Let
	\begin{equation}\label{4.11}
	M_{\mu\nu}=\hat{x}_{\mu\nu}+x_{\mu}p_{\nu}-x_{\nu}p_{\mu}.
	\end{equation}
	Then one particular realization of the algebra \eqref{4.1}--\eqref{4.3} is given by
	\begin{equation}\label{4.4} \hat{x}_{\mu}=x_{\mu}\sqrt{1+\big(a^{2}-\beta^{2}\big)p^{2}}+M_{\mu\alpha}a_{\alpha}+\big(a^{2}-
\beta^{2}\big)\hat{x}_{\mu\alpha}p_{\alpha}\frac{1}{1+\sqrt{1+\big( a^{2}-\beta^{2}\big)p^{2}}}.
	\end{equation}
\end{Theorem}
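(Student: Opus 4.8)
The plan is to verify directly, in the same manner as the proof of Theorem~\ref{2-T}, that the operators defined by \eqref{4.4} and \eqref{4.11} satisfy the three defining relations \eqref{4.1}--\eqref{4.3}. It is convenient to abbreviate $c=a^{2}-\beta^{2}$ and $s=\sqrt{1+cp^{2}}$, so that \eqref{4.4} reads $\hat{x}_{\mu}=x_{\mu}s+M_{\mu\alpha}a_{\alpha}+c\,\hat{x}_{\mu\alpha}p_{\alpha}\,\frac{1}{1+s}$. First I would record the auxiliary facts that are actually used. From the Heisenberg relations one obtains, exactly as in \eqref{3.7} with $\beta^{2}$ replaced by $c$, the scalar commutators $[x_{\mu},s]={\rm i}c\,p_{\mu}/s$ and $\big[x_{\mu},\frac{1}{1+s}\big]=-{\rm i}c\,p_{\mu}/(s(1+s)^{2})$. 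From \eqref{4.11} together with \eqref{3} and \eqref{3.2} one obtains the covariance relations $[M_{\mu\nu},x_{\lambda}]=-{\rm i}(x_{\mu}\eta_{\nu\lambda}-x_{\nu}\eta_{\mu\lambda})$, $[M_{\mu\nu},p_{\lambda}]={\rm i}(p_{\nu}\eta_{\mu\lambda}-p_{\mu}\eta_{\nu\lambda})$, $[M_{\mu\nu},\hat{x}_{\lambda\rho}]={\rm i}(\eta_{\mu\lambda}\hat{x}_{\nu\rho}-\eta_{\mu\rho}\hat{x}_{\nu\lambda}-\eta_{\nu\lambda}\hat{x}_{\mu\rho}+\eta_{\nu\rho}\hat{x}_{\mu\lambda})$ and $[M_{\mu\nu},f(p^{2})]=0$ for every function $f$. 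In particular \eqref{4.3} holds at once, since $M_{\mu\nu}=\hat{x}_{\mu\nu}+(x_{\mu}p_{\nu}-x_{\nu}p_{\mu})$ is a sum of two mutually commuting objects, each of which satisfies the Lorentz algebra with the same structure constants.

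For \eqref{4.2} I would expand $[M_{\mu\nu},\hat{x}_{\lambda}]$ term by term. The scalar factors $s$ and $\frac{1}{1+s}$ commute with $M_{\mu\nu}$, and the three pieces $[M_{\mu\nu},x_{\lambda}]\,s$, $[M_{\mu\nu},M_{\lambda\alpha}]a_{\alpha}$ and $c\big([M_{\mu\nu},\hat{x}_{\lambda\alpha}]p_{\alpha}+\hat{x}_{\lambda\alpha}[M_{\mu\nu},p_{\alpha}]\big)\frac{1}{1+s}$ are evaluated with the covariance relations above. Using $\hat{x}_{\lambda\rho}=-\hat{x}_{\rho\lambda}$ to cancel the terms carrying an explicit $p_{\mu}$ or $p_{\nu}$, the coefficient of $\eta_{\nu\lambda}$ collapses to $-{\rm i}\hat{x}_{\mu}$ and that of $\eta_{\mu\lambda}$ to $+{\rm i}\hat{x}_{\nu}$, while the terms left over from $[M_{\mu\nu},M_{\lambda\alpha}]a_{\alpha}$ assemble into $-{\rm i}(a_{\mu}M_{\nu\lambda}-a_{\nu}M_{\mu\lambda})$. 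This reproduces \eqref{4.2}.

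The substantive computation is \eqref{4.1}. Write the three summands of \eqref{4.4} as $A_{\mu}=x_{\mu}s$, $B_{\mu}=M_{\mu\alpha}a_{\alpha}$, $C_{\mu}=c\,\hat{x}_{\mu\alpha}p_{\alpha}\frac{1}{1+s}$, so that $[\hat{x}_{\mu},\hat{x}_{\nu}]$ splits into six bilinear commutators. I expect the partial results $[A_{\mu},A_{\nu}]=-{\rm i}c\,(x_{\mu}p_{\nu}-x_{\nu}p_{\mu})$; $[A_{\mu},B_{\nu}]+[B_{\mu},A_{\nu}]={\rm i}(a_{\mu}A_{\nu}-a_{\nu}A_{\mu})$; $[A_{\mu},C_{\nu}]+[C_{\mu},A_{\nu}]+[C_{\mu},C_{\nu}]=-{\rm i}c\,\hat{x}_{\mu\nu}$, obtained after the $\hat{x}_{\mu\alpha}p_{\alpha}p_{\nu}$-type contributions cancel and the identity $\frac{-2s}{1+s}+\frac{s-1}{1+s}=-1$ is used; $[B_{\mu},B_{\nu}]={\rm i}(a_{\mu}B_{\nu}-a_{\nu}B_{\mu})+{\rm i}a^{2}M_{\mu\nu}$, using the antisymmetry $M_{\alpha\beta}a_{\alpha}a_{\beta}=0$; and $[B_{\mu},C_{\nu}]+[C_{\mu},B_{\nu}]={\rm i}(a_{\mu}C_{\nu}-a_{\nu}C_{\mu})$. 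Summing the six, the pieces of the form $a_{\mu}(\,\cdot\,)-a_{\nu}(\,\cdot\,)$ recombine into ${\rm i}(a_{\mu}\hat{x}_{\nu}-a_{\nu}\hat{x}_{\mu})$, and what remains is $-{\rm i}c\big(x_{\mu}p_{\nu}-x_{\nu}p_{\mu}+\hat{x}_{\mu\nu}\big)+{\rm i}a^{2}M_{\mu\nu}=-{\rm i}c\,M_{\mu\nu}+{\rm i}a^{2}M_{\mu\nu}={\rm i}\beta^{2}M_{\mu\nu}$, which is precisely \eqref{4.1}.

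The step I expect to be the real obstacle is the cross commutator $[B_{\mu},C_{\nu}]$. Expanded naively it contains spurious terms of the shape $(a\cdot p)\hat{x}_{\mu\nu}$ and $(a_{\gamma}\hat{x}_{\gamma\nu})\,p_{\mu}$, and one must notice that the part coming from $[M_{\mu\alpha},\hat{x}_{\nu\beta}]a_{\alpha}p_{\beta}$ cancels these exactly against the part coming from $\hat{x}_{\nu\beta}[M_{\mu\alpha},p_{\beta}]a_{\alpha}$, after rewriting $\hat{x}_{\nu\beta}a_{\beta}=-a_{\gamma}\hat{x}_{\gamma\nu}$ and using $\hat{x}_{\mu\nu}=-\hat{x}_{\nu\mu}$; only $[B_{\mu},C_{\nu}]={\rm i}c\big(\eta_{\mu\nu}\,a_{\gamma}\hat{x}_{\gamma\beta}p_{\beta}-a_{\nu}\hat{x}_{\mu\beta}p_{\beta}\big)\frac{1}{1+s}$ then survives, whose $\eta_{\mu\nu}$ part drops out in the antisymmetric sum $[B_{\mu},C_{\nu}]+[C_{\mu},B_{\nu}]$. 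A comparable amount of care is required in the third of the partial results listed above, where one must track which $\hat{x}_{\mu\nu}$ contribution is multiplied by which function of $p^{2}$ so that they genuinely collapse to the single constant $-{\rm i}c$. Once this bookkeeping is under control, the rest of the algebra in \eqref{4.1} is routine, and \eqref{4.2}--\eqref{4.3} present no difficulty.
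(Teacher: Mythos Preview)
Your proposal is correct and follows essentially the same approach as the paper: a direct verification of \eqref{4.1}--\eqref{4.3} using the auxiliary commutators of $x_{\mu}$ with the scalar functions $s$ and $\frac{1}{1+s}$ together with the Lorentz covariance relations \eqref{4.5}--\eqref{4.10}. Your explicit splitting $\hat{x}_{\mu}=A_{\mu}+B_{\mu}+C_{\mu}$ and the six partial commutators simply make visible the intermediate terms that the paper's proof condenses into a single displayed computation; the key identity $\frac{-2s}{1+s}+\frac{s-1}{1+s}=-1$ and the cancellations you flag in $[B_{\mu},C_{\nu}]$ are exactly what underlies the paper's line $-{\rm i}(a^{2}-\beta^{2})(x_{\mu}p_{\nu}-x_{\nu}p_{\mu})+\cdots-{\rm i}(a^{2}-\beta^{2})\hat{x}_{\mu\nu}$.
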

\begin{proof} We have to show that realization \eqref{4.4} satisfies the algebra \eqref{4.1}--\eqref{4.3}. By using \eqref{3}--\eqref{3.2}, it is easy to see that
\begin{align}
&[ M_{\mu\nu}, p_{\lambda}]={\rm i}(p_{\nu}\eta_{\mu\lambda}- p_{\mu}\eta_{\nu\lambda}),\qquad
[ M_{\mu\nu}, x_{\lambda}] ={\rm i}( x_{\nu}\eta_{\mu\lambda}-x_{\mu}\eta_{\nu\lambda})\label{4.5}
\end{align}
and
\begin{align}\label{4.7}
\big[M_{\mu\nu},\hat{x}_{\rho\sigma}\big]={\rm i}\big( \eta_{\mu\rho} \hat{x}_{\nu\sigma} - \eta_{\mu\sigma} \hat{x}_{\nu\rho} - \eta_{\nu\rho} \hat{x}_{\mu\sigma} + \eta_{\nu\sigma} \hat{x}_{\mu\rho}\big).
\end{align}
Furthermore, from \eqref{2.6} we get
\begin{align}\label{4.8}
&\left[ x_{\mu}, \frac{1}{1+\sqrt{1+\big( a^{2}-\beta^{2}\big) p^{2}}}\right] =\frac{-{\rm i}\big( a^{2}-\beta^{2}\big) p_{\mu}}{\sqrt{1+\big( a^{2}-\beta^{2}\big)p^{2}}\Big( 1+\sqrt{1+\big( a^{2}-\beta^{2}\big)p^{2}}\Big) ^{2}},\\
&\left[ x_{\mu}, \sqrt{1+\big( a^{2}-\beta^{2}\big)p^{2}}\right] =\frac{{\rm i}\big( a^{2}-\beta^{2}\big)p^{2}}{\sqrt{1+\big( a^{2}-\beta^{2}\big)p^{2}}},\label{4.9}
\end{align}
and
\begin{equation}\label{4.10}
\left[ M_{\mu\nu}, \frac{1}{1+\sqrt{1+\big( a^{2}-\beta^{2}\big)p^{2}}} \right] =\left[M_{\mu\nu}, \sqrt{1+\big( a^{2}-\beta^{2}\big)p^{2}}\right]=0.
\end{equation}
{\samepage Now, from \eqref{4.5}--\eqref{4.10} we have
\begin{align*}
\big[\hat{x}_{\mu},\hat{x}_{\nu}\big]={}&\Big[ x_{\mu}\sqrt{1+\big( a^{2}-\beta^{2}\big)p^{2}}+M_{\mu\alpha}a_{\alpha}+\big( a^{2}-\beta^{2}\big)\hat{x}_{\mu\alpha}p_{\alpha}\frac{1}{1+\sqrt{1+\big( a^{2}-\beta^{2}\big)p^{2}}},\\
&\quad x_{\nu}\sqrt{1+\big( a^{2}-\beta^{2}\big)p^{2}}+M_{\nu\rho}a_{\rho}+\big( a^{2}-\beta^{2}\big)\hat{x}_{\nu\rho}p_{\rho}\frac{1}{1+\sqrt{1+\big( a^{2}-\beta^{2}\big)p^{2}}}\Big]\\
={}&-{\rm i}\big( a^{2}-\beta^{2}\big)(x_{\mu}p_{\nu}-x_{\nu}p_{\mu})+{\rm i}(a_{\mu}x_{\nu}-a_{\nu}x_{\mu})\sqrt{1+\big( a^{2}-\beta^{2}\big)p^{2}}\\
&+{\rm i}(a_{\mu}M_{\nu\rho}a_{\rho}-a_{\nu}M_{\mu\alpha}a_{\alpha}+a^{2}M_{\mu\nu})-{\rm i}\big( a^{2}-\beta^{2}\big)\hat{x}_{\mu\nu}\\
={}&{\rm i}\big( a_{\mu}\hat{x}_{\nu}-a_{\nu}\hat{x}_{\mu}+\beta^{2}M_{\mu\nu}\big).
\end{align*}}
In similar way, by using \eqref{4.5}--\eqref{4.10}, we show that \eqref{4.11} and \eqref{4.4} satisfy \eqref{4.2}--\eqref{4.3}.
\end{proof}

For $a_{\mu}=0$, we get the realization of the extended Snyder model found in Section~\ref{sec 3}
\begin{equation*}
\hat{x}_{\mu}=x_{\mu}\sqrt{1-u}-\beta^{2}\hat{x}_{\mu\alpha}p_{\alpha}\frac{1}{1+\sqrt{1-u}}.
\end{equation*}
For $\beta^{2}=0$, we find
\begin{equation*}
\hat{x}_{\mu}=x_{\mu}\sqrt{1+a^{2}p^{2}}+M_{\mu\alpha}a_{\alpha}+a^{2}\hat{x}_{\mu\alpha}p_{\alpha}\frac{1}{1+\sqrt{1+a^{2}p^{2}}}.
\end{equation*}
This is a new result corresponding to the $\kappa$-Poincar\'{e} algebra with additional tensorial generators~$\hat{x}_{\mu\nu}$. The most general realizations of $\hat{x}_{\mu}$ in all cases in this section are obtained by using the most general corresponding similarity transformations.
Construction of Hermitian realizations in Sections~\ref{sec 3} and~\ref{sec-4} can be obtained simply by changing
$\hat{x}_{\mu}$ with $\frac{1}{2} \big(\hat{x}_{\mu}+ \hat{x}^{\dagger}_{\mu}\big)$, as in Remark~\ref{rem-3}.

\section{Conclusion and discussion}\label{sec-5}

In Section \ref{sec-2}, we defined similarity transformations \eqref{2.3} and using Propositions~\ref{1-P} and~\ref{2-P}, we proved realizations of the Snyder model~\eqref{2.4} in Theorem~\ref{1-T}. This result was obtained in~\cite{Battisti-2, Battisti-1} without using similarity transformations. In Section~\ref{sec 3}, we gave a proof of Theorem~\ref{2-T} (equations~\eqref{3.5}--\eqref{3.6}) that includes additional tensorial generators $\hat x_{\mu \nu}$ and it is a generalization of the original Snyder realization. This is a new exact result leading to an associative star product and coassociative coproduct \cite{Meljanac-1}. Also, we obtained exact results for the realizations of the extended Snyder model with functions $\varphi_{1}(u)$ and $\varphi_{2}(u)$ \eqref{3.12} using Propositions~\ref{1-P} and~\ref{2-P}. In Section~\ref{sec-4}, we proved Theorem~\ref{3-T} (equations~\eqref{4.11} and \eqref{4.4})
and this is a new exact result for the $\kappa$-deformed extended Snyder model.

The physical role of the additional tensorial generators $\hat x_{\mu \nu}$ is not completely clear, except that they mathematically lead to an associative star product and coassociative coproduct \cite{Girelli, Meljanac-1}. Some attempts for applications of the extended Snyder model were made in \cite{Girelli, Lukierski-1, Meljanac-1} and of the $\kappa$-deformed extended Snyder model in \cite{Lukierski-2, Meljanac-10,Meljanac-9}. Possible applications of the generalizations of the Snyder model to curved spaces were discussed in \cite{Meljanac-4, Meljanac-5}. The future prospect of our investigation is the construction of the star product and twist.

\subsection*{Acknowledgement}
We thank S.~Kre\v{s}i\'c Juri\'c, Z.~\v{S}koda and S.~Mignemi for useful comments.

\pdfbookmark[1]{References}{ref}
\LastPageEnding

\end{document}